\documentclass[a4paper,10pt]{article}

\usepackage[top=1in, bottom=1.05in, left=1.1in, right=1.1in]{geometry}
\setlength{\parskip}{5pt}

\usepackage{graphicx}
\usepackage{authblk}
\usepackage[round]{natbib}
\usepackage{hyperref}
\usepackage{physics}
\usepackage{lipsum}
\usepackage{amsthm}
\usepackage{booktabs} 

\usepackage{palatino}

\usepackage{amssymb,amsfonts,amsthm}
\usepackage{mathtools}
\usepackage{tikz}
\usetikzlibrary{patterns}
\usepackage{multirow}

\newtheorem{theorem}{Theorem}[section]

\newtheorem{proposition}[theorem]{Proposition}
\newtheorem{definition}[theorem]{Definition}
\newtheorem{assumption}[theorem]{Assumption}

\begin{document}

\title{On Bayesian inferential tasks with infinite-state jump processes: efficient data augmentation \thanks{Work supported by RCUK through the Horizon Digital Economy Research grants (EP/G065802/1, EP/M000877/1).} \\ {\Large Research report, June 2018}}

\author[a,b]{Iker Perez\thanks{Corresponding author address: Horizon Digital Economy Research, Triumph Road, Nottingham, NG7 2TU. Email: iker.perez@nottingham.ac.uk}}
\author[b]{Lax Chan}
\author[b,c]{Mercedes Torres Torres}
\author[d]{James Goulding}
\author[a]{Theodore Kypraios}
\affil[a]{School of Mathematical Sciences, University of Nottingham}
\affil[b]{Horizon Digital Economy Research, University of Nottingham}
\affil[c]{School of Computer Science, University of Nottingham}
\affil[d]{N-Lab, University of Nottingham}
\date{\vspace{-25pt}}

\maketitle

\begin{abstract}
Advances in sampling schemes for Markov jump processes have recently enabled multiple inferential tasks. However, in statistical and machine learning applications, we often require that these continuous-time models find support on structured and infinite state spaces. In these cases, exact sampling may only be achieved by often inefficient particle filtering procedures, and rapidly augmenting observed datasets remains a significant challenge. Here, we build on the principles of uniformization and present a tractable framework to address this problem, which greatly improves the efficiency of existing state-of-the-art methods commonly used in small finite-state systems, and further scales their use to infinite-state scenarios. We capitalize on the marginal role of variable subsets in a model hierarchy during the process jumps, and describe an algorithm that relies on measurable mappings between pairs of states and carefully designed sets of synthetic jump observations. The proposed method enables the efficient integration of slice sampling techniques and it can overcome the existing computational bottleneck. We offer evidence by means of experiments addressing inference and clustering tasks on both simulated and real data sets. 
\end{abstract}

\section{Introduction} 

Recent advances addressing conditional sampling schemes for \textit{Markov Jump Processes} (MJPs) have made inference possible in elaborate random systems with finite discrete support \citep{rao13a}. These processes often describe the dynamics that underpin many observable phenomena in diverse fields such as biology, chemistry or network evaluation (eg. \citet{hobolth2009,zhao2016bayesian,sutton2011}). However, space support in such systems is often structured and countably infinite, and inference by means of exact sampling schemes remains a significant challenge to date only addressed by particle filtering and sequential Monte Carlo methods, which are computationally intensive and find limitations due to degeneracy problems and the need for resampling \citep{hajiaghayi2014efficient,miasojedow2015particle}. It is still common practice to approximate large spaces by means of reduced sets, or to employ approximate methods relying on simplifying independence assumptions \citep{opper2008variational} or models with continuous support \citep{golightly2015bayesian}.

Here, we address the complexities posed by the data augmentation task through exact sampling, and present a tractable MCMC framework for inference with infinite-state latent structured MJPs. We target multi-component \textit{coupled} systems, whose joint behaviour often exhibits strong temporal dependencies; such as networks of queues, Markov modulated models or phase-type processes. We build on the \textit{uniformization} principles explored in \citet{zhang2017collapsed,zhang2017efficient,pan2016markov} and references therein, in combination with \textit{forward filtering backward sampling} procedures that are reported to offer better effective samples within small-scale finite-state models (cf. \citet{miasojedow2015particle}). To improve on existing procedures, we exploit the marginal role of isolated system components in the state transitions across large or infinite spaces. For this purpose, we discuss the design of mappings between a synthetic set of \textit{jump observations} and pairs of system states, and we enable the integration of \textit{slice sampling} techniques previously targeted at mixture models \citep{Walker2007,Kalli2011}. Hence, we construct an exact  sampler (in the Monte Carlo sense) that, without resorting to particle filtering procedures, can still bound the computational complexity and iteratively explore an infinite space of MJP paths by means of restricted, alternating and sequentially correlated slices. 

Finally, we conduct experiments addressing various inference and clustering tasks for service diagnosis and system strain evaluation. We provide evidence of the sampler (i) overcoming the computational bottleneck with jump models supported in large or infinite spaces and (ii) attaining significant gains in speed versus the baseline algorithm in \citet{rao13a}, for equivalent effective samples in finite or smaller systems. For this, we employ simulated and real data sets that contain (i) network performance metrics and (ii) discharge records within hospital emergency units, and we inspect the relation between processing times, effective samples and coupling amongst output traces.

\section{Markov jump processes}

An MJP is a right-continuous stochastic process $X=(X_t)_{t\geq 0}$, such that time-indexed variables $X_t$ are defined within a measurable space $(\mathcal{S},\Sigma_\mathcal{S})$. Here, $\mathcal{S}$ is a countably infinite set of possible states, and $\Sigma_\mathcal{S}$ stands for its power set. We assume the process $X$ to be time-homogeneous and governed by a \textit{generator matrix} $Q$, so that
$$\mathbb{P}(X_{t+\mathrm{d}t}=x'|X_{t}=x) = \mathbb{I}_{(x=x')} + Q_{x,x'}\mathrm{d}t + o(\mathrm{d}t)$$
for all $x,x'\in\mathcal{S}$ and $t\geq 0$; with $\mathbb{I}_{(\cdot)}$ defining a logical indicator function. The values of $Q$ describe rates for transitions within states in $X$. Also, $Q_{x,x'}\geq 0$ for all $x \neq x'$ and $Q_x \coloneqq Q_{x,x} = - \sum_{x'\in\mathcal{S}: x\neq x'} Q_{x,x'}$ so that rows sum to $0$. The time to departure or \textit{jump} from a state $x$ is exponentially distributed and its rate is given by $|Q_x|$, for all $x\in\mathcal{S}$. 
\begin{assumption} \label{AssQ}
A generator matrix $Q$ is such that its underlying process $X$ may only reach a finite countable subset of $\mathcal{S}$ within a fixed number of jumps $n\in\mathbb{N}$.
\end{assumption}
The above assumption implies that $Q$ is \textit{sparse}; however, an infinite subset of $S$ is reachable during any time interval, as there could exist infinitely many jumps within. Now, note that an MJP  is piecewise-constant, and can thus be characterized by a sequence $\boldsymbol{t}=\{t_0,\dots,t_{n}\}$  of transition times along with states $\boldsymbol{x}=\{x_0,\dots,x_{n}\}$, so that $X\equiv (\boldsymbol{t},\boldsymbol{x})$. For simplicity, we assume that the initial state $x_0\in\mathcal{S}$ is known, however it is also possible to define an initial distribution over states in $\mathcal{S}$.  The likelihood of a path $(\boldsymbol{t},\boldsymbol{x})$ over a finite time interval $[0,T]$ is such that
\begin{align}
f_{X}(\boldsymbol{t},\boldsymbol{x})  & \propto  e^{Q_{x_n}(T-t_n)} \prod_{i=1}^n Q_{x_{i-1},x_{i}} e^{Q_{x_{i-1}}(t_i-t_{i-1})}, \label{pathProbs}
\end{align}
with respect to a suitably defined base measure. Note that $n$ is the most recent jump in $X$ before time $T$.

\subsection{Observations}

Monitoring a jump process will often result in a sequence of measurements $Y=\{Y_r\}_{r\geq 1}$ at some arbitrary time points, such as sensor data in robotics or frequency recordings in audio processing tasks. These may be produced deterministically and be supported within $\mathcal{S}$, but are most often categorical, discrete or real-valued random variables governed by some conditional distribution $F_Y(y|x)= \mathbb{P}(Y_r\leq y|X_{t_r}=x)$, for observation times $t_r\geq 0$, $r\geq 1$. For instance, in Figure \ref{obsY} we observe a \textit{birth-death} process during a short time interval; there, we find dotted in red color some random observations collected at equally spaced times.

\begin{figure}[h!]
\vskip 0in
\begin{center}
	\resizebox{0.9\linewidth}{4cm}{
    \includegraphics[width=\textwidth]{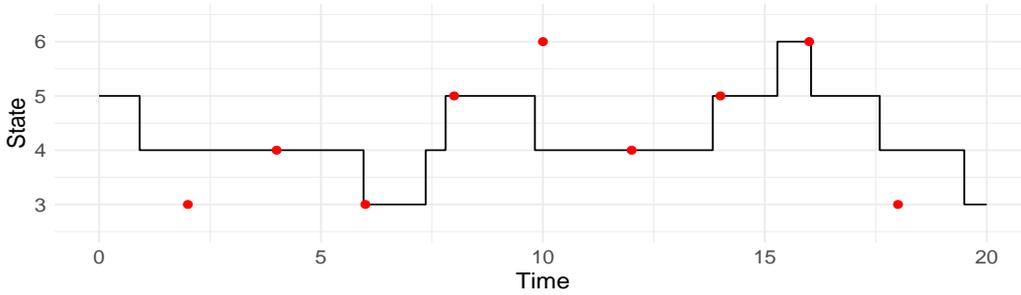} 
    }
    \vskip 0in
 \caption{Sample birth-death process realization with random observations dotted in red colour.} \label{obsY}
\end{center}
\vskip -0.2in
\end{figure}

In other instances, we may retrieve data at process jumps with some probability $q_Z\in [0,1]$, such as mutation events in genetics. We denote this by $Z=\{Z_d\}_{d\geq 1}$ and let $\mathcal{J}$ be a set of \textit{jump observations}. Often, $Z$ can be defined as $Z_d=\textstyle\mathcal{T}(\lim_{t\nearrow t_d} X_{t},X_{t_d})$ for some measurable function $\mathcal{T}:\mathcal{S}^2\rightarrow\mathcal{J}$, which determines what is observable. Also, $\mathcal{J}$ must contain a \textit{no-observation} element, denoted $\varnothing$, such that $\mathcal{T}(x,x')=\varnothing$ whenever a transition $x\rightarrow x'$ is not attainable or its underlying observation is undefined. Specifically, $\mathcal{T}(x,x)=\varnothing$ for all $x\in\mathcal{S}$. In the birth-death process shown in Figure \ref{obsY}, retrieved jump observations could, for instance, represent the \textit{sign} of jumps, so that $\mathcal{J}=\{1,-1,\varnothing\}$ and 
\begin{align}\mathcal{T}(x,x')= \begin{cases}
    x'-x  & \text{if} \quad |x'-x|=1,\\
    \varnothing & \text{otherwise.}
  \end{cases} \label{TSigns} 
  \end{align}
Assuming $q_Z=1$, all observations are retrieved and $Z=\{-1,-1,1,1,-1,1,1,-1,-1,-1\}$.

Finally, note that in large structured models, both $Y$ and $Z$ may relate to all variables in the entire model hierarchy of $X$. However, they are usually only concerned with marginal model subcomponents, such as the monitoring of an individual service node in a network of queues, or an specific sub-population in a \textit{predator-prey} model.

\subsection{Problem statement} \label{statement}

Let $O=(Y,Z)$ denote some retrieved observations from a process realization with an undetermined generator matrix $Q$, over a fixed time interval $[0,T]$. The basis for inference on the generator rates is the density $f_{Y,Z}(O|Q)$; however, this is proportional to an infinite weighted product of MJP path densities $X$ in \eqref{pathProbs}, and is thus intractable. In this paper, we address the inferential task by data augmentation, describing an efficient scheme for jump processes with infinite state support. The approach is also relevant when a set $S$ is finite but large enough to pose computational impediments.

In our experiments, we will further explore clustering exercises by means of memberships variables $\boldsymbol{c}=\{c_k\}_{k=1,\dots,K}$ for the underlying latent jump processes. Given observations $\boldsymbol{O} = \{O^k\}_{k=1\dots,K}$ produced by $L$ undetermined generator matrices $\boldsymbol{Q}=\{Q^l\}_{l=1,\dots,L}$, we have
\begin{align*}
\mathbb{P}(c_k=c|\boldsymbol{O},\boldsymbol{Q})   \propto f_{Y,Z}(O^k|Q^{c}) \cdot \pi_{c_k}(c)
\end{align*}
for $k=1,\dots,K$, with $c\in\{1,\dots,L\}$, and
\begin{align*}
f_{Q^l}(Q|\boldsymbol{O},\boldsymbol{c}) \propto \prod_{k : c_k=l} f_{Y,Z}(O^k|Q)  \cdot \pi_{Q^l}(Q)
\end{align*}
for $l=1,\dots,L$, where $\pi_{c_k}$ and $\pi_{Q^l}$ specify priors over the membership classes and generators. Here, a prior over $Q$ will factor across the individual generator rates. 

\section{Auxiliary-variable data augmentation} \label{SamplerSection}
Let $(\hat{\boldsymbol{t}},\hat{\boldsymbol{x}})$ define a renewal process over a finite time interval $[0,T]$, with $\hat{x}_i\in\mathcal{S}$ for $i\geq 0$ and such that 
\begin{itemize}
\item holding times are exponentially distributed with a fixed rate $\Omega \geq \max_{x} |Q_x|$, and
\item states form a realization from a discrete-time Markov chain, with initial state $x_0\in\mathcal{S}$ and transition probability matrix $P=I+Q/\Omega$.
\end{itemize}
\begin{proposition}
The process $(\hat{\boldsymbol{t}},\hat{\boldsymbol{x}})$ describes an \textit{augmented} MJP, and it is equivalent to $X=(\boldsymbol{t},\boldsymbol{x})$ with generator $Q$ and density function \eqref{pathProbs}.
\end{proposition}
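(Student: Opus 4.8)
The plan is to establish the equivalence by showing that the augmented process, once the self-transitions generated by the positive diagonal of $P$ are marginalized out, yields exactly the path density \eqref{pathProbs}. First I would verify that $P=I+Q/\Omega$ is a genuine stochastic matrix: its off-diagonal entries $P_{x,x'}=Q_{x,x'}/\Omega\geq 0$, its diagonal entries $P_{x,x}=1-|Q_x|/\Omega\geq 0$ by the choice $\Omega\geq\max_x|Q_x|$, and its rows sum to one because the rows of $Q$ sum to zero. The crucial observation is that $P$ permits self-transitions $x\to x$ with positive probability $1-|Q_x|/\Omega$; these are \emph{virtual} jumps that leave the trajectory of $(\hat{\boldsymbol{t}},\hat{\boldsymbol{x}})$ unchanged, so the augmented path collapses onto an ordinary MJP path once they are removed. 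Proving the proposition then amounts to checking that the induced marginal law on the collapsed path coincides with \eqref{pathProbs}.

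Next I would compute, for a single genuine sojourn of duration $\tau$ in a state $x$ terminating in a real jump to $x'\neq x$, the marginal density obtained by summing over the number $m\geq 0$ of intervening virtual jumps and integrating over their positions. Each of the $m+1$ inter-event gaps contributes a holding-time factor $\Omega e^{-\Omega w}$, each virtual jump a factor $P_{x,x}$, and the terminal jump a factor $P_{x,x'}=Q_{x,x'}/\Omega$. Since the integrand depends only on $m$ and $\tau$, the position integral over $\{0<s_1<\dots<s_m<\tau\}$ contributes the ordered-simplex volume $\tau^m/m!$. Collecting terms gives $Q_{x,x'}\,e^{-\Omega\tau}\sum_{m\geq 0}((\Omega-|Q_x|)\tau)^m/m!$, and recognizing the exponential series collapses this to $Q_{x,x'}\,e^{Q_x\tau}$, exactly the sojourn factor in \eqref{pathProbs}. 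The same computation without the terminal-jump factor (using the survival probability $e^{-\Omega w}$ over the final gap) handles the last segment $[t_n,T]$ and returns $e^{Q_{x_n}(T-t_n)}$. Multiplying across sojourns, which are independent given the state sequence, recovers \eqref{pathProbs} in full.

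The main obstacle I expect is the marginalization step itself: one must correctly account for an unbounded number of virtual jumps with random positions, set up the inter-event decomposition so that the Jacobian of the change of variables is unity, and identify the resulting power series as $e^{(\Omega-|Q_x|)\tau}$ so that the dependence on the artificial rate $\Omega$ cancels exactly. A cleaner but less explicit alternative would be to verify directly that the jump-and-route mechanism of $(\hat{\boldsymbol{t}},\hat{\boldsymbol{x}})$ reproduces the infinitesimal relation $\mathbb{P}(\hat X_{t+\mathrm{d}t}=x'\mid \hat X_t=x)=\mathbb{I}_{(x=x')}+Q_{x,x'}\mathrm{d}t+o(\mathrm{d}t)$, since a candidate transition occurs with probability $\Omega\,\mathrm{d}t+o(\mathrm{d}t)$ and is routed by $P$; substituting $P=I+Q/\Omega$ makes the $\Omega$ terms cancel and returns the generator $Q$. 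I would also remark that the per-sojourn argument is local, so it is unaffected by the countably infinite state space admitted under Assumption \ref{AssQ}.
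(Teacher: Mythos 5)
Your proposal is correct. Note that the paper does not actually prove this proposition itself --- it simply defers to the literature (\citealp{hobolth2009}) --- and the argument you give is precisely the standard one found there: check that $P=I+Q/\Omega$ is stochastic, marginalize the Poisson-many virtual self-transitions over the ordered simplex to get the $\tau^m/m!$ volume factor, and resum the exponential series so that $\Omega$ cancels and each sojourn contributes $Q_{x,x'}e^{Q_x\tau}$ (with the survival factor $e^{Q_{x_n}(T-t_n)}$ for the censored final segment), recovering \eqref{pathProbs}. Your computation is sound, including the boundary case $\Omega=|Q_x|$ where the series collapses to the $m=0$ term, so there is nothing to add beyond the observation that you have supplied the proof the paper omits.
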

This is a well known result and a proof of equivalence can be found in e.g. \citet{hobolth2009}. The procedure that constructs these augmented sets of times $\hat{\boldsymbol{t}}=\{\hat{t}_0,\dots,\hat{t}_{m}\}$ and states $\hat{\boldsymbol{x}}=\{\hat{x}_0,\dots,\hat{x}_{m}\}$ is commonly referred to as \textit{uniformization} (c.f. \citet{jensen1953markoff}). A \textit{uniformized} MJP path $X$ will often include self-transitions; and we refer to a transition $i$ as a \textit{virtual jump} whenever $\hat{x}_i=\hat{x}_{i-1}$. For instance, in Figure \ref{virtualJumpExpl} (left) we observe an augmented uniformized MJP trajectory for a process with $3$ states, there, we find virtual jumps at times $\hat{t}_2,\hat{t}_4$ and $\hat{t}_5$, represented by white circles on the time axis. On the right hand side plot, we observe the equivalent path after virtual times and states have been removed.
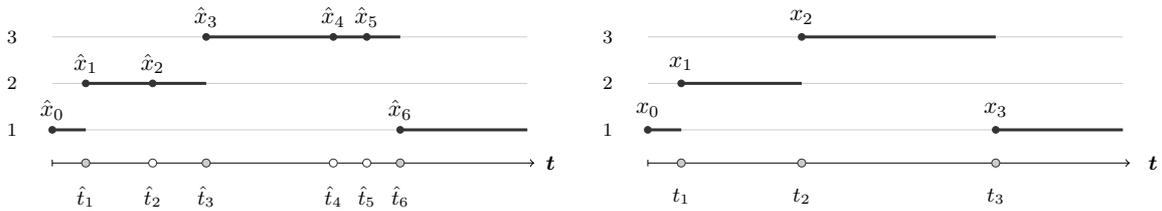
\begin{figure*}[ht]
\vskip 0.1in
\begin{center}
\resizebox{\textwidth}{!}{%
\begin{tikzpicture}

\draw[->,line width=0.10mm] (0.1cm,-0.9cm) -- ++ (7.2cm,0cm);
\draw (0.1,-0.95cm) -- ++(0cm,0.1cm);
\node at (7.55cm,-0.9cm) {\small $\boldsymbol{t}$};
\node at (-0.5cm,-0.4cm) {\footnotesize $1$};
\node at (-0.5cm,0.3cm) {\footnotesize $2$};
\node at (-0.5cm,1cm) {\footnotesize $3$};
\draw[-,draw=black!20!white,line width=0.10mm] (0.1cm,-0.4cm) -- (7.2cm,-0.4cm);
\draw[-,draw=black!20!white,line width=0.10mm] (0.1cm,0.3cm) -- (7.2cm,0.3cm); 
\draw[-,draw=black!20!white,line width=0.10mm] (0.1cm,1cm) -- (7.2cm,1cm);
\draw[-,draw=black!80!white,line width=0.45mm] (0.1cm,-0.4cm) -- (0.6cm,-0.4cm); \draw[-,draw=black!80!white,line width=0.45mm] (5.3cm,-0.4cm) -- (7.2cm,-0.4cm);
\draw[-,draw=black!80!white,line width=0.45mm] (0.6cm,0.3cm) -- (2.4cm,0.3cm); 
\draw[-,draw=black!80!white,line width=0.45mm] (2.4cm,1cm) -- (5.3cm,1cm); 
\filldraw[black!80!white] (0.1cm,-0.4cm) circle [radius=0.05cm]; 
\filldraw[black!80!white] (0.6cm,0.3cm) circle [radius=0.05cm]; 
\filldraw[black!80!white] (1.6cm,0.3cm) circle [radius=0.05cm]; 
\filldraw[black!80!white] (2.4cm,1cm) circle [radius=0.05cm]; 
\filldraw[black!80!white] (4.3cm,1cm) circle [radius=0.05cm]; 
\filldraw[black!80!white] (4.8cm,1cm) circle [radius=0.05cm]; 
\filldraw[black!80!white] (5.3cm,-0.4cm) circle [radius=0.05cm]; 
\node at (0.1cm,-0.1cm) {$\hat{x}_0$};
\node at (0.6cm,0.6cm) {$\hat{x}_1$};
\node at (1.6cm,0.6cm) {$\hat{x}_2$};
\node at (2.4cm,1.3cm) {$\hat{x}_3$};
\node at (4.3cm,1.3cm) {$\hat{x}_4$};
\node at (4.8cm,1.3cm) {$\hat{x}_5$};
\node at (5.3cm,-0.1cm) {$\hat{x}_6$};
\filldraw[fill=black!20!white,draw=black!80!white] (0.6cm,-0.9cm) circle [radius=0.06cm]; 
\filldraw[fill=black!00!white,draw=black!80!white] (1.6cm,-0.9cm) circle [radius=0.06cm]; 
\filldraw[fill=black!20!white,draw=black!80!white] (2.4cm,-0.9cm) circle [radius=0.06cm]; 
\filldraw[fill=black!00!white,draw=black!80!white] (4.3cm,-0.9cm) circle [radius=0.06cm]; 
\filldraw[fill=black!00!white,draw=black!80!white] (4.8cm,-0.9cm) circle [radius=0.06cm]; 
\filldraw[fill=black!20!white,draw=black!80!white] (5.3cm,-0.9cm) circle [radius=0.06cm]; 
\node at (0.6cm,-1.4cm) {\small $\hat{t}_1$};
\node at (1.6cm,-1.4cm) {\small $\hat{t}_2$};
\node at (2.4cm,-1.4cm) {\small $\hat{t}_3$};
\node at (4.3cm,-1.4cm) {\small $\hat{t}_4$};
\node at (4.8cm,-1.4cm) {\small $\hat{t}_5$};
\node at (5.3cm,-1.4cm) {\small $\hat{t}_6$};

\draw[->,line width=0.10mm] (9cm,-0.9cm) -- ++ (7.2cm,0cm);
\draw (9,-0.95cm) -- ++(0cm,0.1cm);
\node at (16.55cm,-0.9cm) {\small $\boldsymbol{t}$};
\node at (8.4cm,-0.4cm) {\footnotesize $1$};
\node at (8.4cm,0.3cm) {\footnotesize $2$};
\node at (8.4cm,1cm) {\footnotesize $3$};
\draw[-,draw=black!20!white,line width=0.10mm] (9cm,-0.4cm) -- ++ (7.1cm,0cm);
\draw[-,draw=black!20!white,line width=0.10mm] (9cm,0.3cm) -- ++ (7.1cm,0cm); 
\draw[-,draw=black!20!white,line width=0.10mm] (9cm,1cm) -- ++ (7.1cm,0cm);
\draw[-,draw=black!80!white,line width=0.45mm] (9cm,-0.4cm) -- (9.5cm,-0.4cm); \draw[-,draw=black!80!white,line width=0.45mm] (14.2cm,-0.4cm) -- (16.1cm,-0.4cm);
\draw[-,draw=black!80!white,line width=0.45mm] (9.5cm,0.3cm) -- (11.3cm,0.3cm); 
\draw[-,draw=black!80!white,line width=0.45mm] (11.3cm,1cm) -- (14.2cm,1cm); 
\filldraw[black!80!white] (9cm,-0.4cm) circle [radius=0.05cm]; 
\filldraw[black!80!white] (9.5cm,0.3cm) circle [radius=0.05cm]; 
\filldraw[black!80!white] (11.3cm,1cm) circle [radius=0.05cm]; 
\filldraw[black!80!white] (14.2cm,-0.4cm) circle [radius=0.05cm]; 
\node at (9cm,-0.1cm) {$x_0$};
\node at (9.5cm,0.6cm) {$x_1$};
\node at (11.3cm,1.3cm) {$x_2$};
\node at (14.2cm,-0.1cm) {$x_3$};
\filldraw[fill=black!20!white,draw=black!80!white] (9.5cm,-0.9cm) circle [radius=0.06cm]; 
\filldraw[fill=black!20!white,draw=black!80!white] (11.3cm,-0.9cm) circle [radius=0.06cm]; 
\filldraw[fill=black!20!white,draw=black!80!white] (14.2cm,-0.9cm) circle [radius=0.06cm]; 
\node at (9.5cm,-1.4cm) {\small $t_1$};
\node at (11.3cm,-1.4cm) {\small $t_2$};
\node at (14.2cm,-1.4cm) {\small $t_3$};

\end{tikzpicture}
}
\vskip 0in
\caption{On the left, augmented MJP trajectory for a process with $3$ states. On the right, equivalent path after virtual times and states are removed.} 
\label{virtualJumpExpl}
\end{center}
\vskip -0.1in
\end{figure*}

Now, let $\boldsymbol{u}$ define an auxiliary family of $m-1$ random variables $u_i\in\Sigma_{\mathcal{J}}$ such that for all  $i=1,\dots,m$, 
\begin{align}
\mathbb{P}(u_i=\{\mathcal{T}(\hat{x}_{i-1},\hat{x}_{i})\}|&\hat{x}_{i-1},\hat{x}_{i})= p / (1-q_Z)^{\mathbb{I}(\mathcal{T}(\hat{x}_{i-1},\hat{x}_i)\neq\varnothing)}, \label{aux1}
\end{align}
and
\begin{align}
\mathbb{P}(u_i=\mathcal{J}|&\hat{x}_{i-1},\hat{x}_{i})= 1 - \mathbb{P}(u_i=\{\mathcal{T}(\hat{x}_{i-1},\hat{x}_{i})\}|\hat{x}_{i-1},\hat{x}_{i}), \label{aux2}
\end{align}
with some arbitrarily fixed $p\in[0,1-q_Z)$. Whenever $u_i\neq \mathcal{J}$, this constitutes a \textit{clamped node} designed to complement real data. It is defined in order simplify the forthcoming forward filtering and backward sampling procedures in the data augmentation task. On a basic level, 
\begin{itemize}
\item an auxiliary variable $u_i=\{\mathcal{T}(\hat{x}_{i-1},\hat{x}_{i})\}$ will map a pair of states $\hat{x}_{i-1},\hat{x}_{i}\in\mathcal{S}$ to an element of $\mathcal{J}$ that holds \textit{limited} information regarding the jump across the states, and
\item an auxiliary variable $u_i=\mathcal{J}$ will hold no information. 
\end{itemize}
Analogue definitions of such variables may be found in \cite{Kalli2011,Perez2017}. For example, in the augmented process pictured in Figure \ref{virtualJumpExpl} (left diagram), by letting $\mathcal{J}=\{1,-1,\varnothing\}$ and $\mathcal{T}$ as in \eqref{TSigns}, we retrieve a random vector $\boldsymbol{u}=\{\{1\},\{\varnothing\},\{1\},\mathcal{J},\mathcal{J},\{\varnothing\}\}$, which corresponds to 
\begin{itemize}
\item $2$ jumps of magnitude $1$ and a positive sign, at times $\hat{t}_1$ and $\hat{t}_3$,
\item $2$ transitions, marked as elements $\{\varnothing\}$, that correspond to either (i) virtual jumps or (ii) jumps with a magnitude greater than $1$ (in either direction), at times $\hat{t}_2$ and $\hat{t}_6$,
\item $2$ uninformative variables that contain no jump information.
\end{itemize}
It is important to observe that, by solely looking at $\boldsymbol{u}$, we cannot retrieve the original augmented MJP path, i.e. there exist multiple \textit{compatible} sequences $\hat{\boldsymbol{x}}$ that could produce the same vector. Also, we note that by defining auxiliary variables by means of $\mathcal{T}$, we have assumed that they resemble real observations in $Z$; however, in practice, $\boldsymbol{u}$ can be tailored to each problem (see also Example \ref{experiment2}). Finally, in Figure \ref{plateNotation} we find membership, generator, auxiliary and observation variables in plate notation, for a clustering task as discussed in Subsection \ref{statement}.

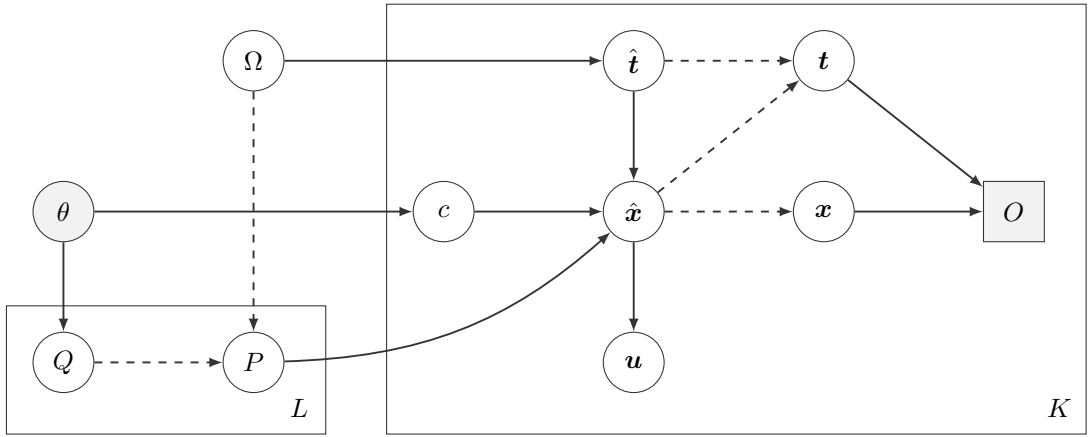
\begin{figure}[ht]
\vskip 0.1in
\begin{center}
\begin{tikzpicture}
\filldraw[white,draw=black!80!white] (4.25cm,2.75cm) rectangle ++ (9.2,-5.7);
\node[] at (13.1cm,-2.6cm) {$K$};
\filldraw[white,draw=black!80!white] (-0.75cm,-1.25cm) rectangle ++ (4.2,-1.7);
\node[] at (3.1cm,-2.6cm) {$L$};
\node (the) at (0cm,0cm) [shape=circle,fill=black!05!white,draw=black!80!white,minimum size=0.8cm] { $\theta$}; 
\node (ome) at (2.5cm,2cm) [shape=circle,draw=black!80!white,minimum size=0.8cm] { $\Omega$}; 
\node (Quni) at (2.5cm,-2cm) [shape=circle,draw=black!80!white,minimum size=0.8cm] { $P$}; 
\node (Q) at (0.cm,-2cm) [shape=circle,draw=black!80!white,minimum size=0.8cm] { $Q$}; 
\node (tvir) at (7.5cm,2cm) [shape=circle,draw=black!80!white,minimum size=0.8cm] { $\hat{\boldsymbol{t}}$}; 
\node (xvir) at (7.5cm,0cm) [shape=circle,draw=black!80!white,minimum size=0.8cm] { $\hat{\boldsymbol{x}}$}; 
\node (c) at (5cm,0cm) [shape=circle,draw=black!80!white,minimum size=0.8cm] { $c$}; 
\node (u) at (7.5cm,-2cm) [shape=circle,draw=black!80!white,minimum size=0.8cm] { $\boldsymbol{u}$}; 
\node (t) at (10cm,2cm) [shape=circle,draw=black!80!white,minimum size=0.8cm] { $\boldsymbol{t}$}; 
\node (x) at (10cm,0cm) [shape=circle,draw=black!80!white,minimum size=0.8cm] { $\boldsymbol{x}$}; 
\node (Y) at (12.5cm,0cm) [shape=rectangle,fill=black!05!white,draw=black!80!white,minimum size=0.8cm] { $O$}; 

\draw [>=latex,->,black!80!white,line width=0.25mm] (the) to (Q);
\draw [>=latex,->,black!80!white,line width=0.25mm] (the) to (c);
\draw [>=latex,->,black!80!white,line width=0.25mm,dashed] (Q) to (Quni);
\draw [>=latex,->,black!80!white,line width=0.25mm,dashed] (ome) to (Quni);
\draw [>=latex,->,black!80!white,line width=0.25mm] (ome) to (tvir);
\draw [>=latex,->,black!80!white,line width=0.25mm] (c) to (xvir);
\draw [>=latex,->,black!80!white,line width=0.25mm] (Quni) to [bend right=20] (xvir);
\draw [>=latex,->,black!80!white,line width=0.25mm] (tvir) to (xvir);
\draw [>=latex,->,black!80!white,line width=0.25mm] (xvir) to (u);
\draw [>=latex,->,black!80!white,line width=0.25mm,dashed] (tvir) to (t);
\draw [>=latex,->,black!80!white,line width=0.25mm,dashed] (xvir) to (t);
\draw [>=latex,->,black!80!white,line width=0.25mm,dashed] (xvir) to (x);
\draw [>=latex,->,black!80!white,line width=0.25mm] (x) to (Y);
\draw [>=latex,->,black!80!white,line width=0.25mm] (t) to (Y);
\end{tikzpicture}
\vskip 0.1in
\caption{Plate notation for the augmented model. Here, $\theta$ denotes a vector of fixed parameters defining unspecified priors over the generator matrices and the membership variables. Dashed arrows point to variables with deterministic dependencies.} \label{plateNotation}
\end{center}
\vskip -0.1in
\end{figure}

\begin{definition} 
Let $\boldsymbol{u}=\{u_i\}_{i=1,\dots,m}$ with $u_i\in\Sigma_{\mathcal{J}}$ be a sequence of auxiliary observations at times $0\leq \hat{t}_1,\dots,\hat{t}_{m}\leq T$. We refer to a uniformized sequence $\hat{\boldsymbol{x}}$ as `compatible' with $\boldsymbol{u}$ whenever $(\hat{x}_{i-1},\hat{x}_{i}) \in \mathcal{T}^{-1}(u_i)$, for all $i=1,\dots,m$.
\end{definition}

Conditioned on $\hat{\boldsymbol{t}}$ and a set $\boldsymbol{u}$, the subset of compatible paths in $X$ agreeing with some observations $O$ becomes finite. This property will make it possible to augment the observations with a full path $X$ in large-scale models with infinite state-support. 

\subsection{A Markov Chain Monte Carlo Algorithm}
First, for an arbitrary starting $(\boldsymbol{t},\boldsymbol{x})$ with $0=t_0<\dots<t_n<T$ and a transition probability matrix $P$, note that
\begin{align}
f_{\hat{\boldsymbol{t}}}(\hat{t}_1,\dots,\hat{t}_m|\boldsymbol{t}, \boldsymbol{x},\Omega,P,O) &\propto f_{X}(\boldsymbol{t}, \boldsymbol{x} | \hat{t}_1,\dots,\hat{t}_m, \Omega,P) \cdot f_{\hat{\boldsymbol{t}}}(\hat{t}_1,\dots,\hat{t}_m|\Omega) \nonumber \\
& \propto \Omega^{m-n} \cdot \textstyle\prod_{i=0}^n P_{x_i,x_i}^{\mathcal{V}_{i}} \label{virtualJumps}
\end{align}
whenever $\boldsymbol{t}\in\{\hat{t}_0\}\cup\{\hat{t}_1,\dots,\hat{t}_m\}$; where $\mathcal{V}_{i}$ denotes the number of elements $\hat{t}_1,\dots,\hat{t}_m$ contained in $(t_i,t_{i+1})$, with $t_{n+1}=T$, and $\sum_{i=1}^n\mathcal{V}_{i-1}=m-n$. This is independent of $O$ and may be sampled by adding virtual transitions to $\boldsymbol{t}$ using successive Poisson processes with rates $\Omega \cdot P_{x_i,x_i}$, $i\in\{0,\dots,n\}$ (cf. \citet{rao13a}).

Next, states in $\hat{\boldsymbol{x}}$ can be induced given knowledge of $\hat{\boldsymbol{t}},\boldsymbol{t},\boldsymbol{x}$, and an auxiliary sequence $\boldsymbol{u}|\hat{\boldsymbol{x}}$ sampled from \eqref{aux1}-\eqref{aux2}. An augmented path contained within a \textit{slice} (cf. \citet{neal2003}) of the full space of MJP paths $X|O$ is now attainable. A tractable procedure for the task is achieved by sampling $\hat{\boldsymbol{x}}|\hat{\boldsymbol{t}},\boldsymbol{u},\Omega,P,O$ and removing virtual entries. Let $\alpha_0(x)=\mathbb{I}(x= x_0)$, $x\in\mathcal{S}$; then, 
in view of \eqref{aux1}-\eqref{aux2}, for all $i\in\{1,\dots,m\}$  and $x'\in\mathcal{S}$ s.t. $n(\bar{\mathcal{S}}_{x',i})>0$, a forward filtering procedure reduces to predictive steps
$$\alpha_{i}(x') \propto  \textstyle\sum_{x\in\bar{\mathcal{S}}_{x',i}} \phi(x,x') \cdot P_{x,x'}   \cdot  \alpha_{i-1}(x) $$
with a penalising term for unobserved transitions
$$\phi(x,x') =  (1-q_Z)^{\mathbb{I}(\mathcal{T}(x,x')\neq\varnothing)} -p,$$
and
$$\bar{\mathcal{S}}_{x',i} = \{x\in\mathcal{S}:  |Q_{x,x'}|>0,(x,x')\in\mathcal{R}_i , \alpha_{i-1}(x)>0 \}$$
for all $x'\in\mathcal{S}$; along with updates 
$$\alpha_{i}(x') \propto \alpha_{i}(x') \cdot \textstyle\prod_{r : t_r\in[\hat{t}_i,\hat{t}_{i+1})}  f_Y(Y_r|x').$$
Above, $\mathcal{R}_i\in\Sigma_{\mathcal{S}}^2$ s.t. $\mathcal{R}_i = \mathcal{T}^{-1}(Z_d) \cap \mathcal{T}^{-1}(u_i)$ if $t_d=\hat{t}_i$ for some $d\geq 1$, and $\mathcal{R}_i = \mathcal{T}^{-1}(u_i)$ otherwise. Also, $n(\cdot)$ denotes the cardinality of a set, and we use $f_Y$ to denote the density (or mass) function for measurements in $Y$.

The above equations suggest an implementation with dynamic arrays, restricting the explorable space in the presence of clamped nodes in $\boldsymbol{u}$. Note from \eqref{aux1} that these variables are defined in order to offset the computational burden in inferential tasks with few relative observations. Finally, we sample $\hat{x}_m$ from $x\in\mathcal{S}$ in proportion to $\alpha_m(x)$ and proceed backwards; i.e. for $i\in\{m-1,\dots,1\}$ and $x\in\bar{\mathcal{S}}_{\hat{x}_{i+1},i+1}$, define
$$\beta(x) \propto \phi(x,\hat{x}_{i+1})  \cdot  P_{x,\hat{x}_{i+1}} \cdot  \alpha_i(x),$$
and sample $\hat{x}_i$ in proportion to $\beta(x)$.

\begin{proposition}
Let $Q$ be a generator matrix such that Assumption \ref{AssQ} is satisfied; also, set $\Omega > \max_{x} |Q_x|$. Then, the sampler described yields an ergodic Markov chain whose stationary distribution is the full space of processes $X|O,Q$ with countably infinite state support.
\end{proposition}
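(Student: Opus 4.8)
The plan is to read the sampler as an auxiliary-variable (data-augmentation / collapsed Gibbs) scheme and treat invariance and ergodicity separately. First I would set up the augmented joint law $\Pi$ over $(\hat{\boldsymbol{t}},\hat{\boldsymbol{x}},\boldsymbol{u})$ induced by uniformization together with the definitions \eqref{aux1}--\eqref{aux2}, and observe that the target $f_X(\boldsymbol{t},\boldsymbol{x}\mid O,Q)$ is exactly the marginal of $\Pi$ over the virtual jumps and over $\boldsymbol{u}$. Indeed, the uniformization equivalence of the preceding Proposition removes the dependence on $\hat{\boldsymbol{t}}$ and on the inserted self-states, so that the reduced-path marginal is $\propto f_X(\boldsymbol{t},\boldsymbol{x})\, f_Y(O\mid\boldsymbol{t},\boldsymbol{x})$; and the conditionals \eqref{aux1}--\eqref{aux2} form a genuine probability kernel in $u_i$ (they sum to one by construction), so summing $\boldsymbol{u}$ out leaves that marginal intact. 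It therefore suffices to show the transition kernel leaves $\Pi$ invariant.

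Next I would verify that each of the three moves is an exact full conditional of $\Pi$. The virtual-time update is precisely \eqref{virtualJumps}, i.e.\ independent Poisson insertions with rates $\Omega\,P_{x_i,x_i}$ on each inter-jump interval; the auxiliary update draws $\boldsymbol{u}\mid\hat{\boldsymbol{x}}$ from \eqref{aux1}--\eqref{aux2}; and the forward-filtering backward-sampling (FFBS) step returns an exact draw of $\hat{\boldsymbol{x}}\mid\hat{\boldsymbol{t}},\boldsymbol{u},\Omega,P,O$, because the $\alpha_i$ predictive/update recursions and the backward weights $\beta$ are exactly the factorization of that conditional. Since a composition of exact conditional updates preserves the joint, $\Pi$—and hence its reduced-path marginal, the target—is stationary. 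The point to check carefully here is that each FFBS quantity is well-defined on the infinite space: by Assumption \ref{AssQ} only finitely many states are reachable from $x_0$ within the $m$ uniformized steps, so every $\alpha_i$ has finite support $\bar{\mathcal{S}}_{x',i}$, the normalizing sums are finite, and the backward draw is a bona fide probability.

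For ergodicity I would establish $\phi$-irreducibility (with respect to a natural dominating measure that puts Lebesgue weight on jump times and counting weight on states and jump counts) and aperiodicity, and then invoke standard theory (Meyn--Tweedie, Tierney): an invariant probability law together with $\phi$-irreducibility and aperiodicity yields convergence in total variation. Irreducibility rests on three positivity facts. First, because $\Omega>\max_x|Q_x|$ \emph{strictly}, the self-transition rate $\Omega\,P_{x,x}=\Omega-|Q_x|$ is strictly positive at every state, so the virtual-jump Poisson process has positive intensity on all of $[0,T]$ and arbitrarily many candidate times can be inserted in any neighbourhood; this is exactly where the strict inequality is needed, since under $\Omega=\max_x|Q_x|$ the maximal-rate state would admit no virtual jumps. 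Second, since $p<1-q_Z$, the event $\{u_i=\mathcal{J}\text{ for all }i\}$ has positive probability, so with positive probability no clamping restricts the move and the FFBS explores the full $O$-compatible set. Third, any target path $X'$ in the support uses only transitions with $Q_{x'_{j-1},x'_j}>0$, equivalently $P_{x'_{j-1},x'_j}>0$, so the backward sampler assigns it positive weight once $\hat{\boldsymbol{t}}$ contains times near the transitions of $X'$. Combining these, a single sweep of the three updates carries the chain from any state into every neighbourhood of every support point with positive weight; aperiodicity follows since the chain can also reproduce its current reduced path (insert virtuals, then delete them via FFBS).

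The step I expect to be the main obstacle is the transition from the finite-state argument of \citet{rao13a} to the genuinely infinite-dimensional setting, on two fronts: (a) guaranteeing the FFBS is normalizable and the chain well-defined, which hinges entirely on the finiteness granted by Assumption \ref{AssQ}; and (b) upgrading the finite-state irreducibility claim to $\phi$-irreducibility on a path space where the target path may carry strictly more jumps than the current one—resolved by the unbounded virtual-jump insertion of fact one together with the all-uninformative event of fact two. I would also flag that writing $\Omega>\max_x|Q_x|$ tacitly requires $\sup_x|Q_x|<\infty$; if exit rates are unbounded the single-$\Omega$ uniformization must be replaced by a state-dependent envelope, after which the same intervalwise argument applies.
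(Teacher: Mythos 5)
Your proposal follows essentially the same route as the paper's own (very brief) proof: invariance because the auxiliary variables and uniformized augmentation leave the target marginal unaltered, well-definedness of the forward--backward recursions from the finiteness granted by Assumption \ref{AssQ}, and irreducibility from the strict domination $\Omega > \max_x \abs{Q_x}$ making every state accessible through additional virtual transitions. Your write-up is considerably more careful than the paper's sketch --- in particular the explicit role of $p < 1 - q_Z$ in guaranteeing the all-uninformative event, and the observation that a single $\Omega$ tacitly requires $\sup_x \abs{Q_x} < \infty$, are points the paper passes over --- but the underlying argument is the same.
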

\begin{proof}
Since the generator is sparse, $\hat{\boldsymbol{x}}|\hat{\boldsymbol{t}}$ is always supported within a finite product space of $n(\hat{\boldsymbol{t}})$ subsets of $\mathcal{S}$. However, $\Omega$ strictly dominates every jump intensity, and from \eqref{virtualJumps} we note that any state in $\mathcal{S}$ is accessible by sampling the number of uniformized transitions necessary. Also, the presence of auxiliary variables leaves the target marginal distribution unaltered, so that the chain will reach the desired invariant distribution, and $\mathbb{P}(\boldsymbol{x}_{k+1},\boldsymbol{t}_{k+1}|\boldsymbol{x}_{k},\boldsymbol{t}_{k},\Omega,P,O)>0$ for any pair of paths and sampling step $k\geq 0$. 
\end{proof}

\begin{figure*}[ht]
\vskip 0.1in
\begin{center}
\resizebox{\textwidth}{!}{%
\begin{tikzpicture}


\draw[->,line width=0.10mm] (0.1cm,-0.9cm) -- ++ (7.2cm,0cm);
\draw (0.1,-0.95cm) -- ++(0cm,0.1cm);
\node at (7.55cm,-0.9cm) {\small $\boldsymbol{t}$};
\node at (-0.25cm,-0.4cm) {\footnotesize $1$};
\node at (-0.25cm,0.3cm) {\footnotesize $2$};
\node at (-0.25cm,1cm) {\footnotesize $3$};
\draw[-,draw=black!20!white,line width=0.10mm] (0.1cm,-0.4cm) -- (7.2cm,-0.4cm);
\draw[-,draw=black!20!white,line width=0.10mm] (0.1cm,0.3cm) -- (7.2cm,0.3cm); 
\draw[-,draw=black!20!white,line width=0.10mm] (0.1cm,1cm) -- (7.2cm,1cm);
\draw[-,draw=black!80!white,line width=0.45mm] (0.1cm,-0.4cm) -- (0.6cm,-0.4cm); \draw[-,draw=black!80!white,line width=0.45mm] (5.3cm,-0.4cm) -- (7.2cm,-0.4cm);
\draw[-,draw=black!80!white,line width=0.45mm] (0.6cm,0.3cm) -- (2.4cm,0.3cm); 
\draw[-,draw=black!80!white,line width=0.45mm] (2.4cm,1cm) -- (5.3cm,1cm); 
\filldraw[black!80!white] (0.1cm,-0.4cm) circle [radius=0.05cm]; 
\filldraw[black!80!white] (0.6cm,0.3cm) circle [radius=0.05cm]; 
\filldraw[black!80!white] (2.4cm,1cm) circle [radius=0.05cm]; 
\filldraw[black!80!white] (5.3cm,-0.4cm) circle [radius=0.05cm]; 
\filldraw[fill=red!50!white,draw=black!60!white] (0.3cm,-0.4cm) circle [radius=0.07cm]; 
\filldraw[fill=red!50!white,draw=black!60!white] (1.2cm,0.3cm) circle [radius=0.07cm]; 
\filldraw[fill=red!50!white,draw=black!60!white] (6.6cm,-0.4cm) circle [radius=0.07cm]; 
\filldraw[fill=black!20!white,draw=black!80!white] (0.6cm,-0.9cm) circle [radius=0.06cm]; 
\filldraw[fill=black!20!white,draw=black!80!white] (2.4cm,-0.9cm) circle [radius=0.06cm]; 
\filldraw[fill=black!20!white,draw=black!80!white] (5.3cm,-0.9cm) circle [radius=0.06cm]; 


\draw[->,line width=0.10mm] (9cm,-0.9cm) -- ++ (7.2cm,0cm);
\draw (9,-0.95cm) -- ++(0cm,0.1cm);
\node at (16.55cm,-0.9cm) {\small $\boldsymbol{t}$};
\node at (8.75cm,-0.4cm) {\footnotesize $1$};
\node at (8.75cm,0.3cm) {\footnotesize $2$};
\node at (8.75cm,1cm) {\footnotesize $3$};
\draw[-,draw=black!20!white,line width=0.10mm] (9cm,-0.4cm) -- ++ (7.1cm,0cm);
\draw[-,draw=black!20!white,line width=0.10mm] (9cm,0.3cm) -- ++ (7.1cm,0cm); 
\draw[-,draw=black!20!white,line width=0.10mm] (9cm,1cm) -- ++ (7.1cm,0cm);
\draw[-,draw=black!80!white,line width=0.45mm] (9cm,-0.4cm) -- (9.5cm,-0.4cm); \draw[-,draw=black!80!white,line width=0.45mm] (14.2cm,-0.4cm) -- (16.1cm,-0.4cm);
\draw[-,draw=black!80!white,line width=0.45mm] (9.5cm,0.3cm) -- (11.3cm,0.3cm); 
\draw[-,draw=black!80!white,line width=0.45mm] (11.3cm,1cm) -- (14.2cm,1cm); 
\filldraw[black!80!white] (9cm,-0.4cm) circle [radius=0.05cm]; 
\filldraw[black!80!white] (9.5cm,0.3cm) circle [radius=0.05cm]; 
\filldraw[black!80!white] (11.3cm,1cm) circle [radius=0.05cm]; 
\filldraw[black!80!white] (14.2cm,-0.4cm) circle [radius=0.05cm]; 
\filldraw[fill=red!50!white,draw=black!60!white] (9.2cm,-0.4cm) circle [radius=0.07cm]; 
\filldraw[fill=red!50!white,draw=black!60!white] (10.1cm,0.3cm) circle [radius=0.07cm]; 
\filldraw[fill=red!50!white,draw=black!60!white] (15.5cm,-0.4cm) circle [radius=0.07cm]; 
\filldraw[fill=black!20!white,draw=black!80!white] (9.5cm,-0.9cm) circle [radius=0.06cm]; 
\filldraw[fill=black!00!white,draw=black!80!white] (10.5cm,-0.9cm) circle [radius=0.06cm]; 
\filldraw[fill=black!20!white,draw=black!80!white] (11.3cm,-0.9cm) circle [radius=0.06cm]; 
\filldraw[fill=black!00!white,draw=black!80!white] (11.8cm,-0.9cm) circle [radius=0.06cm]; 
\filldraw[fill=black!00!white,draw=black!80!white] (13.2cm,-0.9cm) circle [radius=0.06cm]; 
\filldraw[fill=black!00!white,draw=black!80!white] (13.7cm,-0.9cm) circle [radius=0.06cm]; 
\filldraw[fill=black!20!white,draw=black!80!white] (14.2cm,-0.9cm) circle [radius=0.06cm]; 
\filldraw[fill=black!00!white,draw=black!80!white] (14.7cm,-0.9cm) circle [radius=0.06cm]; 
\node at (8.9cm,-1.4cm) {\footnotesize $\boldsymbol{u:}$};
\node at (9.5cm,-1.4cm)  {\footnotesize $\uparrow$};
\node at (10.5cm,-1.4cm) {\footnotesize $\mathcal{J}$};
\node at (11.3cm,-1.4cm)  {\footnotesize $\mathcal{J}$};
\node at (11.8cm,-1.4cm) {\footnotesize $\mathcal{J}$};
\node at (13.2cm,-1.4cm) {\footnotesize $\varnothing$};
\node at (13.7cm,-1.4cm)  {\footnotesize $\mathcal{J}$};
\node at (14.2cm,-1.4cm)  {\footnotesize $\downarrow$};
\node at (14.7cm,-1.4cm)  {\footnotesize $\varnothing$};


\draw[->,line width=0.10mm] (0.1cm,-4.5cm) -- ++ (7.2cm,0cm);
\draw (0.1,-4.55cm) -- ++(0cm,0.1cm);
\node at (7.55cm,-4.5cm) {\small $\boldsymbol{t}$};
\node at (-0.25cm,-4cm) {\footnotesize $1$};
\node at (-0.25cm,-3.3cm) {\footnotesize $2$};
\node at (-0.25cm,-2.6cm) {\footnotesize $3$};
\draw[-,draw=black!20!white,line width=0.10mm] (0.1cm,-4cm) -- ++ (7.1cm,0cm);
\draw[-,draw=black!20!white,line width=0.10mm] (0.1cm,-3.3cm) -- ++ (7.1cm,0cm); 
\draw[-,draw=black!20!white,line width=0.10mm] (0.1cm,-2.6cm) -- ++ (7.1cm,0cm);
\draw[-,draw=black!80!white,line width=0.45mm] (0.1cm,-4cm) -- (0.6cm,-4cm); \draw[-,draw=black!80!white,line width=0.45mm] (5.3cm,-4cm) -- (7.2cm,-4cm);
\draw[-,draw=black!80!white,line width=0.45mm] (0.6cm,-3.3cm) -- (1.6cm,-3.3cm); 
\draw[-,draw=black!80!white,line width=0.45mm] (4.8cm,-2.6cm) -- (5.3cm,-2.6cm); 
\filldraw[black!80!white] (0.1cm,-4cm) circle [radius=0.05cm]; 
\filldraw[black!80!white] (0.6cm,-3.3cm) circle [radius=0.05cm]; 
\filldraw[black!80!white] (4.8cm,-2.6cm) circle [radius=0.05cm]; 
\filldraw[black!80!white] (5.3cm,-4cm) circle [radius=0.05cm]; 
\filldraw[fill=red!50!white,draw=black!60!white] (0.3cm,-4cm) circle [radius=0.07cm]; 
\filldraw[fill=red!50!white,draw=black!60!white] (1.2cm,-3.3cm) circle [radius=0.07cm]; 
\filldraw[fill=red!50!white,draw=black!60!white] (6.6cm,-4cm) circle [radius=0.07cm]; 
\filldraw[fill=black!20!white,draw=black!80!white] (0.6cm,-4.5cm) circle [radius=0.06cm]; 
\filldraw[fill=black!00!white,draw=black!80!white] (1.6cm,-4.5cm) circle [radius=0.06cm]; 
\filldraw[fill=black!00!white,draw=black!80!white] (2.4cm,-4.5cm) circle [radius=0.06cm]; 
\filldraw[fill=black!00!white,draw=black!80!white] (2.9cm,-4.5cm) circle [radius=0.06cm]; 
\filldraw[fill=black!00!white,draw=black!80!white] (4.8cm,-4.5cm) circle [radius=0.06cm]; 
\filldraw[fill=black!20!white,draw=black!80!white] (5.3cm,-4.5cm) circle [radius=0.06cm]; 
\node at (0cm,-5cm) {\footnotesize $\boldsymbol{u:}$};
\node at (0.6cm,-5cm)  {\footnotesize $\uparrow$};
\node at (5.3cm,-5cm)  {\footnotesize $\downarrow$};


\draw[->,line width=0.10mm] (9cm,-4.5cm) -- ++ (7.2cm,0cm);
\draw (9,-4.55cm) -- ++(0cm,0.1cm);
\node at (16.55cm,-4.5cm) {\small $\boldsymbol{t}$};
\node at (8.75cm,-4cm) {\footnotesize $1$};
\node at (8.75cm,-3.3cm) {\footnotesize $2$};
\node at (8.75cm,-2.6cm) {\footnotesize $3$};
\draw[-,draw=black!20!white,line width=0.10mm] (9cm,-4cm) -- ++ (7.1cm,0cm);
\draw[-,draw=black!20!white,line width=0.10mm] (9cm,-3.3cm) -- ++ (7.1cm,0cm); 
\draw[-,draw=black!20!white,line width=0.10mm] (9cm,-2.6cm) -- ++ (7.1cm,0cm);
\draw[-,draw=black!80!white,line width=0.45mm] (9cm,-4cm) -- (9.5cm,-4cm); \draw[-,draw=black!80!white,line width=0.45mm] (14.2cm,-4cm) -- (16.1cm,-4cm);
\draw[-,draw=black!80!white,line width=0.45mm] (11.3cm,-3.3cm) -- (13.7cm,-3.3cm);  \draw[-,draw=black!80!white,line width=0.45mm] (9.5cm,-3.3cm) -- (10.5cm,-3.3cm); 
\draw[-,draw=black!80!white,line width=0.45mm] (10.5cm,-2.6cm) -- (11.3cm,-2.6cm);; \draw[-,draw=black!80!white,line width=0.45mm] (13.7cm,-2.6cm) -- (14.2cm,-2.6cm); 
\filldraw[black!80!white] (9cm,-4cm) circle [radius=0.05cm]; 
\filldraw[black!80!white] (9.5cm,-3.3cm) circle [radius=0.05cm]; 
\filldraw[black!80!white] (10.5cm,-2.6cm) circle [radius=0.05cm]; 
\filldraw[black!80!white] (11.3cm,-3.3cm) circle [radius=0.05cm]; 
\filldraw[black!80!white] (13.7cm,-2.6cm) circle [radius=0.05cm]; 
\filldraw[black!80!white] (14.2cm,-4cm) circle [radius=0.05cm]; 
\filldraw[fill=red!50!white,draw=black!60!white] (9.2cm,-4cm) circle [radius=0.07cm]; 
\filldraw[fill=red!50!white,draw=black!60!white] (10.1cm,-3.3cm) circle [radius=0.07cm]; 
\filldraw[fill=red!50!white,draw=black!60!white] (15.5cm,-4cm) circle [radius=0.07cm]; 
\filldraw[fill=black!20!white,draw=black!80!white] (9.5cm,-4.5cm) circle [radius=0.06cm]; 
\filldraw[fill=black!20!white,draw=black!80!white] (10.5cm,-4.5cm) circle [radius=0.06cm]; 
\filldraw[fill=black!20!white,draw=black!80!white] (11.3cm,-4.5cm) circle [radius=0.06cm]; 
\filldraw[fill=black!20!white,draw=black!80!white] (13.7cm,-4.5cm) circle [radius=0.06cm]; 
\filldraw[fill=black!20!white,draw=black!80!white] (14.2cm,-4.5cm) circle [radius=0.06cm]; 
\end{tikzpicture}
}
\vskip 0in
\caption{Diagram with a fragment of a sampling iteration. Top left, 3 measurements represented by red circles and a reference path agreeing with the evidence. Top right, data augmentation with virtual transitions and auxiliary mappings across adjacent pairs of states. Bottom left and right, emptied frame along with a projection of restrictions and new sampled path, respectively. } 
\label{samplerIter}
\end{center}
\vskip -0.1in
\end{figure*}
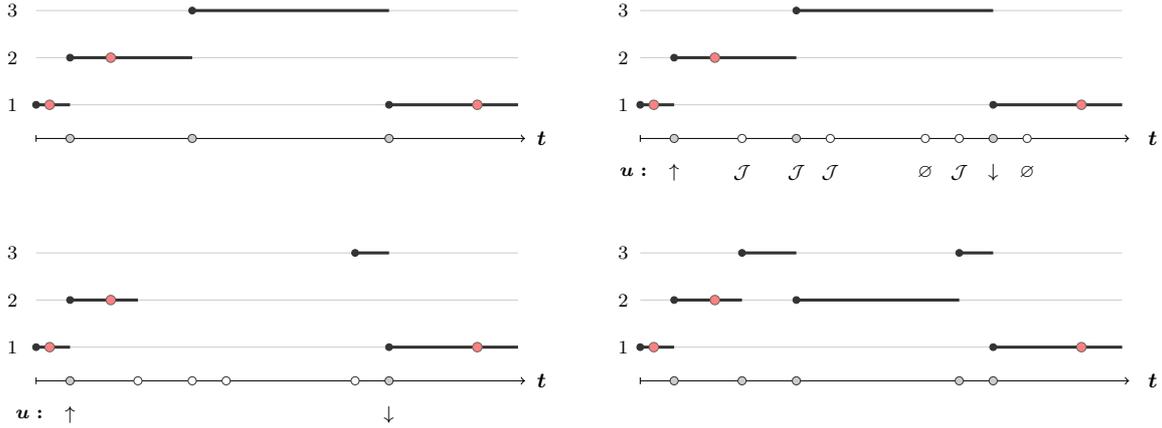

In Figure \ref{samplerIter} we observe an explanatory toy diagram with a sampling iteration, for a finite Markov chain with $3$ states, generator 
\begin{align}
Q=
\begin{bmatrix}
    -\alpha & \alpha & 0 \\
   0 & -1 & 1 \\
   1 & \delta & -(1+\delta) 
\end{bmatrix}, \label{matrixToy}
\end{align}
and fixed values for $\alpha,\delta>0$. Here, red circles represent direct process observations $Y=\{Y_r\}_{r=1,2,3}$ at the corresponding times and states. No jump data is retrieved, yet $\mathcal{J}=\{\varnothing, -1,1 \}$ is defined in order to support summary information regarding the direction (sign) of a jump, so that $\mathcal{T}(x,x')=-1+ 2 \cdot \mathbb{I}_{(x<x')}$ whenever $x'$ is accessible from $x$. On the top left, the algorithm begins with a reference pair $(\boldsymbol{t},\boldsymbol{x})$ which agrees with $Y$. Top right, we observe an augmentation to $(\hat{\boldsymbol{t}},\hat{\boldsymbol{x}},\boldsymbol{u})$, by first sampling virtual transitions with Poisson rates as seen in \eqref{virtualJumps}; later, by producing auxiliary mappings to subsets of $\mathcal{J}$ across adjacent pairs of states. There, virtual and real jump times are represented by circles over the horizontal axis, white and grey respectively. The mappings are displayed below the circles, ranging from clamped nodes with either jump sign evidence (arrow symbols) or virtual jump evidence (\textit{no-observation} symbol $\varnothing$), to open nodes (full space $\mathcal{J}$). Bottom left, we see the new frame emptied, along with a projection of the restrictions imposed by $Y,\boldsymbol{u},\hat{\boldsymbol{t}}$ on the explorable space. Bottom right, a new path is sampled using dynamic arrays, by weighting successive states over clamped epochs. 

Finally, note that the sampler will incur a considerable dependence across sequential latent paths. In practice, it is key for auxiliary variables in $\boldsymbol{u}$ to alternate across subcomponents in a model hierarchy, thus allowing reasonable variability at a marginal level and easing mixing. 

\section{Parameter inference and clustering tasks}

To explore the procedure, our experiments address various inference and clustering tasks. In this regard, we note that $Q$ is often populated by arithmetic operations involving a finite set of independent rates, and posteriors will vary across application domains. For a discussion on identifiability we refer the reader to \cite{ryden1996identifiability}; here, we restrict our attention to fully identifiable systems by means of transient studies. For later reference, we review two approaches to inference and clustering tasks.

\subsection{Gibbs sampling}

A fully Bayesian approach is completed by first sampling from the posterior distribution of process classes given some augmented paths, i.e.
\begin{align*}
\mathbb{P}(c_k=c|X^k,\boldsymbol{Q})   \propto f_{X}(X^k|Q^{c}) \cdot \pi_{c_k}(c)
\end{align*}
for $k\in\{1,\dots,K\}$. Then, it samples posterior generators given $\boldsymbol{X}=\{X^k\}_{k=1\dots,K}$ and the memberships, s.t.
\begin{align*}
f_{Q^l}(Q|\boldsymbol{X},\boldsymbol{c})  \propto  \prod_{k : c_k=l} f_{X}(X^k|Q)  \cdot \pi_{Q^l}(Q)
\end{align*}
for all $l=1,\dots,L$. For clarity, we show in Figure \ref{toyClustering} (left) results from a toy clustering task on $K=100$ processes with $L=3$ different generators. Each matrix $Q$ is of the form \eqref{matrixToy} and takes different values $\{\alpha_l,\delta_l\}_{l=1,2,3}$ as rates. In this example, given $\boldsymbol{X},\boldsymbol{c}$ in a sampler iteration, generator matrices factor across rates, s.t.
$$\alpha_l | \boldsymbol{X},\boldsymbol{c} \sim \Gamma\big( \textstyle \sum_{k : c_k=l} \psi^k_{1\rightarrow 2}, \sum_{k : c_k=l} \tau^k_1 \big)$$ and  
$$\delta_l | \boldsymbol{X},\boldsymbol{c} \sim \Gamma\big( \textstyle \sum_{k : c_k=l} \psi^k_{3\rightarrow 2}, \sum_{k : c_k=l} \tau^k_3 \big),$$
for $l=1,2,3$, assuming reasonably uninformative exponential priors. Here, $ \psi^k_{x\rightarrow x'}$ denotes the transition count between states $x$ and $x'$ in $X^k$, and $\tau^k_x$ is the time spent at state $x$. The contour plots display joint posterior densities across the different pairs of rates. Results are obtained from $4$ MCMC chains with varied starting points, $1000$ iterations and a $100$ burn-in each. 

\subsection{Centroid based procedures} \label{Clustersec}

Centroid approaches within Gibbs iterations segregate augmented paths $\boldsymbol{X}$ into $L$ groups by optimising  
$$ \underset{\boldsymbol{c}}{\arg\min} \sum_{l=1}^L\sum_{k:c_k=l} ||T(X^k)-\boldsymbol{\mu}_l||^2,$$
across membership variables $c_1,\dots,c_K$. Here, $T(\cdot)$ denotes the sufficient statistics for instantaneous rates in $Q$, and $\boldsymbol{\mu}_l$ are vectors with corresponding point estimates for each membership $1,\dots,L$. We resort to iterative refinement techniques such as \textit{k-means} and \textit{partinioning around medioids}, and alternate with data augmentation by providing cluster-level summaries to subsequent sampling steps. For recent work reviewing centroid procedures we refer the reader to \citet{kulis2011revisiting,newling2017k}.

In Figure \ref{toyClustering} (right), we display the final step in an equivalent partitioning procedure around 3-medoids. Results are obtained using the statistics $$T(X^k)= (\psi^k_{1\rightarrow 2},\psi^k_{3\rightarrow 2},\tau^k_1,\tau^k_3)$$ 
as a basis for the segregation of the underlying processes, for $k=1,\dots,100$.
\begin{figure}[h!]
\vskip 0in
\begin{center}
	\resizebox{0.8\linewidth}{5.5cm}{
    \includegraphics[width=\textwidth]{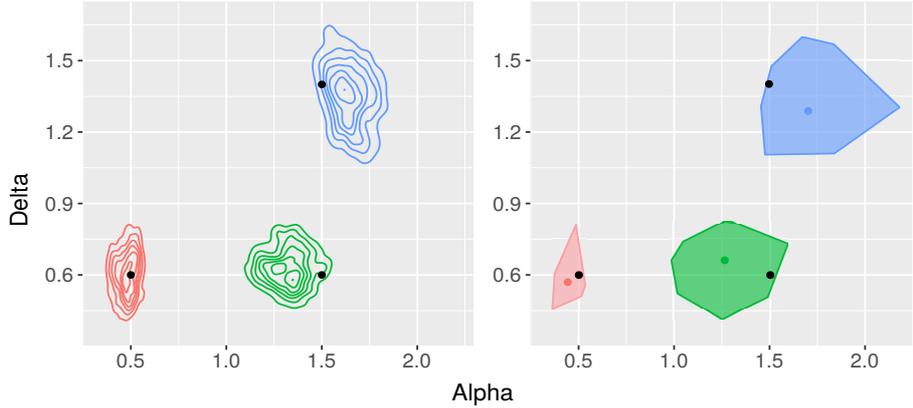} 
    }
    \vskip 0in
 \caption{Left, joint posterior densities across generator rates. Right, last step in a partitioning procedure around medoids. Dark circles correspond to real rates.} \label{toyClustering}
\end{center}
\vskip -0.2in
\end{figure}

\section{Experiments} \label{Experiments}

In the following, we discuss two experiments with real and simulated data sets. Results are produced with R and Java implementations of the sampler. In both cases, starting path and hyper-parameter values are randomized.

\subsection{Markov modulated arrival events} \label{experiment2}

We begin by addressing an augmentation task for missing patient admission times into hospital units. Observations are extracted from real-world event logs at QMC hospital, one of the busiest Accident and Emergency departments in the United Kingdom. Available data includes all $3294$ cardiology discharge event times from the 1st of January to the 31st of March, 2014. \footnote{This dataset is available on  request from http://www.hscic.gov.uk/dars.}. 

We hypothesize the existence of a Markov modulated arrival process for the admissions  (cf. \citet{scott1999bayesian,fearnhead2006exact}), with two latent \textit{regimes} that account for quiet and busy periods.  
Our aim is two-fold; first, to draw inference on patterns of high activity, second, to quantify uncertainty on intensity rates across the regimes. On a basic level, the process $X=(X_t)_{t\geq 0}=(A_t,D_t,R_t)_{t\geq 0}$ includes cumulative admission and discharge counts along with a regime indicator; so that $\mathcal{S}=\mathbb{N}_0^2\times\{1,2\}$. The transition rates in the model are given by
$$\lambda_r = Q_{a,d,r\rightarrow a+1,d,r},   \quad   \nu = Q_{a,d,r\rightarrow a,d,r'}$$
and
$$ \mu\times[L_r\wedge(a-d)] = Q_{a,d,r\rightarrow a,d+1,r},$$
for $r,r'\in\{1,2\}$, $ a,d\in\mathbb{N}_0$ and hospital manpower variables $L_r\in\mathbb{N}$, $r=1,2$. Thus, times to discharge also depend on the system regimes. The sub-component $(D_t)_{t\geq 0}$ is at all times known, by means of discharge event observations $\{Z_i\}_{i\geq 1}$ at times $t_i\geq 0, i\geq 1$. Formally, it holds $Z_i=\textstyle\mathcal{T}(\lim_{t\nearrow t_i} X_{t},X_{t_i})=D_{t_i}$ with $\mathcal{J}=\{\varnothing\}\cup\mathbb{N}$ (we use the index $i$ to avoid confusion with state variables). Note however that the pair $(A_t,R_t)_{t\geq 0}$ is always unknown and the generator $Q$ is infinite. 

The problem formulation poses a complex data augmentation problem on the space of admissions and regimes, as we must impute a large and undetermined number of variables with elaborate dependence structures. Similar tasks are often addressed in the study of infectious diseases and stochastic epidemics, and we refer the reader to \citet{SIM:SIM1912,neal2005case} for relevant literature. To proceed, we fix $\mu=0.5$ and $\nu=1/12$ to account for (i) average discharge times in the unit and (ii) expected changes between standard to \textit{out of hours} working hours. From a preliminary analysis of the data we further hypothesize that $L_1=10$ and $L_2=3$, loosely representing staff capabilities during busy and quiet periods, respectively. To ensure tractability by means of the described sampler, we define a synthetic operator $\mathcal{T}':\mathcal{S}^2\rightarrow\mathcal{J}'$ that maps state pairs to an admission event or regime switch, if any. Thus, $\mathcal{J}'=\{\varnothing\}\cup\{\nearrow_A,\leadsto_R\}\times\mathbb{N}$ supports a transition type along with its entry value. The identifiability constrain $1.25\times \lambda_1<\lambda_2$ is finally imposed, enforcing an expected minimum $25\%$ increase in admissions during busy times.

\begin{figure}[h!]
\vskip 0in
\begin{center}
	\resizebox{0.8\linewidth}{6cm}{
    \includegraphics[width=1\textwidth]{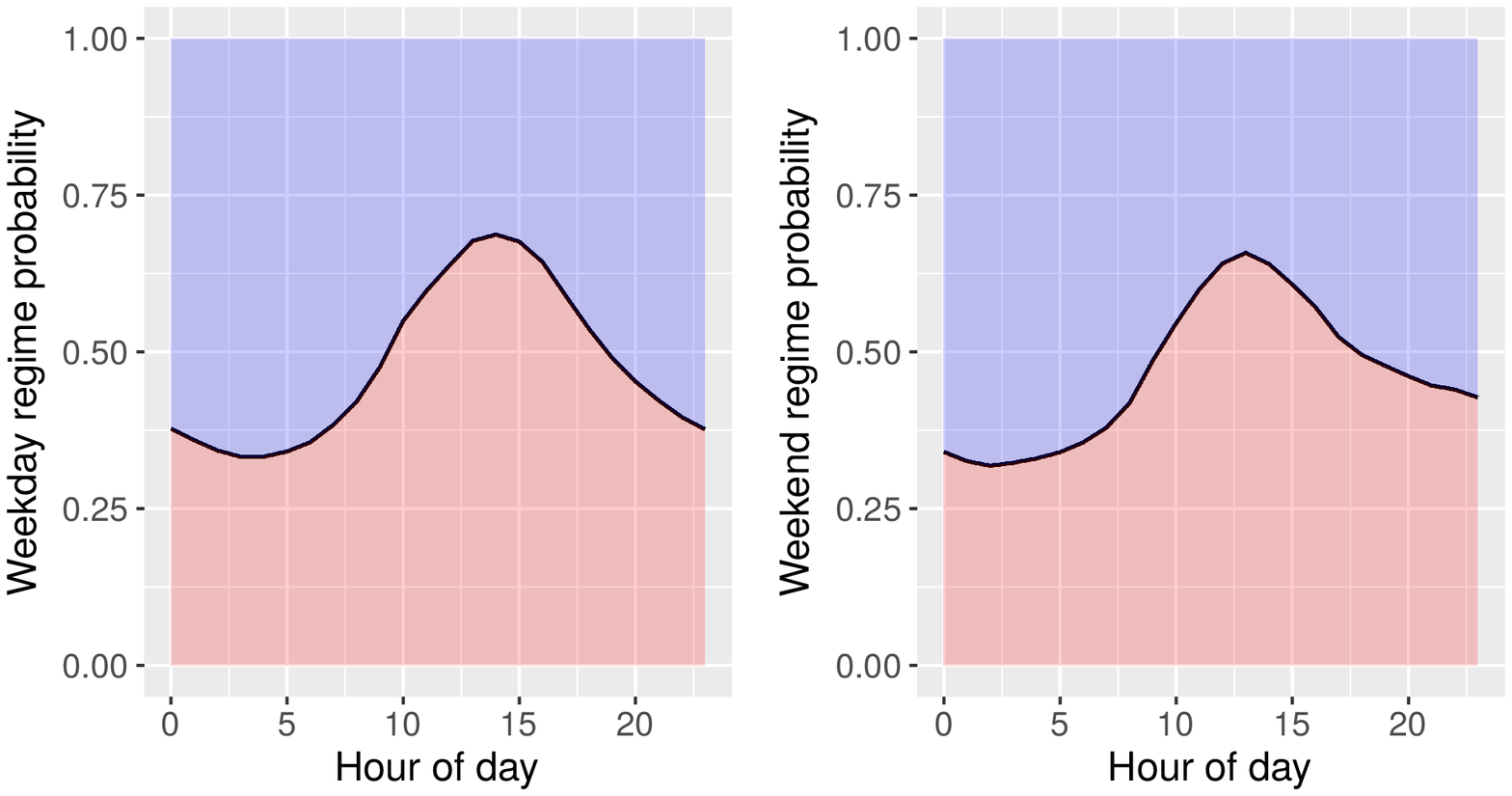}}  \\[5pt]
    \resizebox{0.81\linewidth}{5cm}{
    \includegraphics[width=1\textwidth]{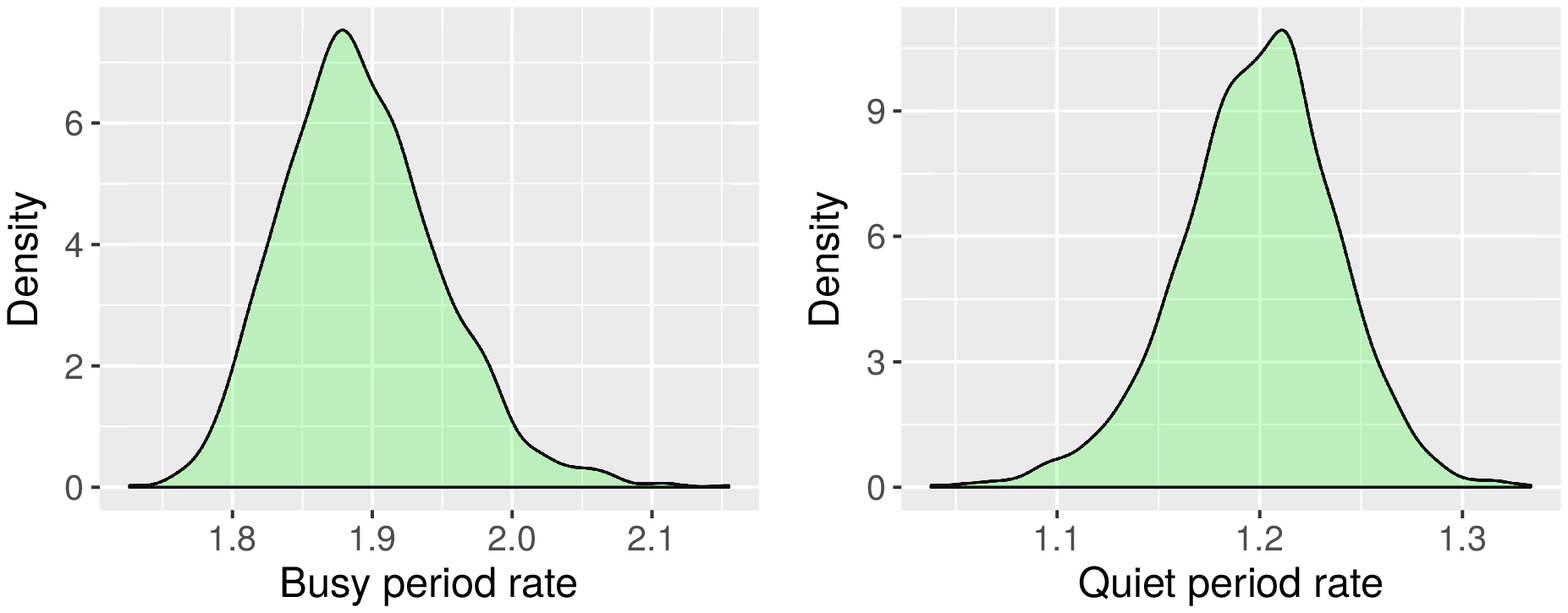} }
    \vskip 0in
   \caption{Top, probabilities for day times falling within quiet or busy working regimes. Bottom, posterior densities for arrival rates.} \label{inferenceRegimes}
\end{center}
\vskip -0.1in
\end{figure}
In Figure \ref{inferenceRegimes} we observe summaries from the various traces after $2000$ MCMC iterations, with a $500$ burn-in period, $p=0.35$ and $\Omega = 3 \cdot \max_{x} |Q_x|$. Top plots display the probabilities that any given time of the day falls within the quiet (represented in blue) or busy (red) working regime; the left hand side distribution corresponds to normal working days, the right hand side one is the weekends' equivalent. As anticipated, results suggest expected levels of admissions are higher during daylight hours, with weekends showing a slower transition into a quiet regime. Additionally, bottom plots show the posterior densities for the intensity rates that drive the unobserved arrivals. The correlation amongst traces for both rates is $0.13$, and the average increase in admissions during busy times stands at $58\%$.

To further evaluate the sampler performance, we fix regimes in $(R_t)_{t\geq 0}$, set a low scale for the dominating rate $\Omega$ and run the experiment on a reduced subset of the available data. This bounds the space of likely admissions and ensures the scalability of the baseline algorithm in \citet{rao13a}, thus allowing comparisons. In Figure \ref{ESS} (left) we display \textit{effective sample sizes} for the arrival rates on repeated tries with different values of $p$. Values are reported on 2000 MCMC iterations with a 2000 burn-in each, and $p=0$ corresponds to the baseline algorithm. There, we notice a predictable decay on effective samples as stronger dependencies are imposed between successive MJP path augmentations, which further increases the coupling with the jump rates. On the other hand, the right hand side diagram shows sample sizes adjusted for computing times. For this augmentation and inference task, we notice we can produce over twice as many effective samples in the same time span, with the appropriate tunning for the sampler.
\begin{figure}[h!]
\vskip 0in
\begin{center}
\resizebox{0.8\linewidth}{5.5cm}{
    \includegraphics[width=1\textwidth]{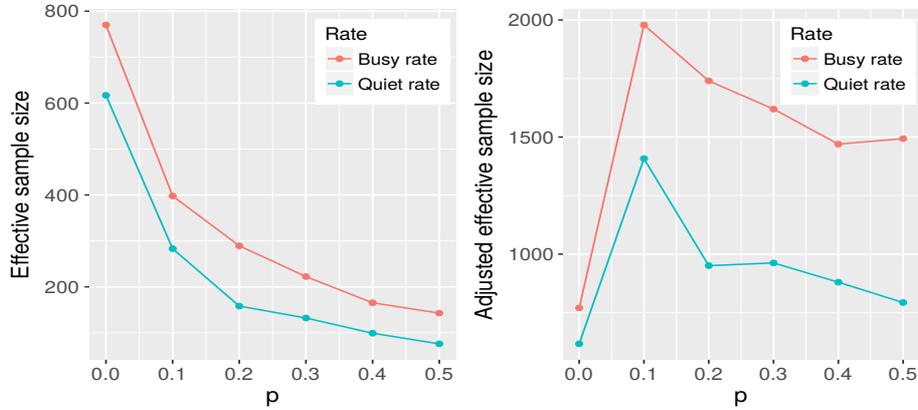}
    }
    \vskip 0in
   \caption{Left, effective sample sizes for arrival rates on repeated experiments with varying values of $p$. Right, equivalent effective sample sizes adjusted for computing times. Values reported on 2000 MCMC iterations and $p=0$ corresponds to the baseline algorithm in \citet{rao13a}.} \label{ESS}
\end{center}
\vskip -0.2in
\end{figure}

\subsection{Queueing networks}

The proposed framework is specially useful for the transient analysis of traffic flow across networks of queues, which underpin the design of modern computing systems and internet services. These networks consist of interacting components processing tasks, and their behaviour gives rise to complex stochastic systems. Uncertainty quantification tasks are challenging in common applications, and we refer the reader to \citet{sutton2011,Wang2016,Perez2017} for a review and state of the art.

The simplest instance of such network is a tandem as shown in Figure \ref{tandemNetwork}. There, shaded circles represent processing units with exponential service rates $\mu_1,\mu_2$, accompanied by \textit{queueing} areas drawn as rectangles. In this jump model, tasks arrive and wait to be processed by the different nodes in the order described by the arrow lines. In short, every task will enter station $1$, queue until the node is empty, wait for an exponentially distributed service time with rate $\mu_1$, head to station $2$, further wait until the node clears, receive an exponential service time with rate $\mu_2$ and then leave the system entirely. The underlying MJP $X$ will thus monitor the total amount of tasks waiting for service in each of the two service stations.
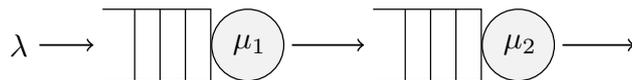
\begin{figure}[h!]
\vskip 0.1in
\begin{center}
\resizebox{0.55\linewidth}{!}{
\begin{tikzpicture}
\draw (0,0) -- ++(1.2cm,0) -- ++(0,-0.8cm) -- ++(-1.2cm,0);
\foreach \i in {1,...,3}
  \draw (1.2cm-\i*8pt,0) -- +(0,-0.8cm);
\filldraw[fill=black!05!white] (1.2cm+0.41cm,-0.4cm) circle [radius=0.4cm]; 

\draw (3cm,0cm) -- ++(1.2cm,0) -- ++(0,-0.8cm) -- ++(-1.2cm,0);
\foreach \i in {1,...,3}
  \draw (4.2cm-\i*8pt,0cm) -- +(0,-0.8cm);
\filldraw[fill=black!05!white] (4.2cm+0.41cm,0cm-0.4cm) circle [radius=0.4cm]; 

\draw[<-,line width=0.20mm] (-0.1cm,-0.4cm) -- (-0.7,-0.4cm) node[left] {$\lambda$};
\draw[->,line width=0.20mm] (2.1cm,-0.4cm) -- (2.9cm,-0.4cm);
\draw[->,line width=0.20mm] (5.1cm,-0.4cm) -- (5.9cm,-0.4cm);

\node at (1.63,-0.4cm) {$\mu_1$};
\node at (4.63,-0.4cm) {$\mu_2$};
\end{tikzpicture}}
\caption{Tandem network with $2$ servers. Shaded circles are servers accompanied by  \textit{queueing} areas pictured as rectangles. Tasks arrive in station $1$ and queue for service before heading to station $2$, where they again queue for service before completely leaving the system.} \label{tandemNetwork}
\end{center}
\vskip -0.1in
\end{figure}

In complex scenarios, there will exist multiple resources, switch units, relays or queueing areas, together processing varied tasks with different requirements; yet, a very reduced number of nodes is ever monitored. Here, we explore a reverse service diagnosis task on a single unobserved node, across a group of $200$ simulated deployments of the aforementioned tandem network; including two $M/M/1$ stations, \textit{first come first served} service disciplines and a single task class. Each network is configured with $1$ of $3$ possible entry nodes with unknown rates $\mu^c_1,c=1,2,3$. In all cases, $X=(X_t)_{t\geq 0}=(X^1_t,X^2_t)_{t\geq 0}$ includes task counts across the $2$ service stations (each including a processing unit and queueing area), and $\mathcal{S}=\mathbb{N}_0^2$. Transition rates are given by
$$\lambda = Q_{x_1,x_2\rightarrow x_1+1,x_2},\quad \mu_1^c \times \mathbb{I}_{(x_1>0)} = Q_{x_1,x_2\rightarrow x_1-1,x_2+1}$$
and
$$ \mu_2 \times \mathbb{I}_{(x_2>0)} = Q_{x_1,x_2\rightarrow x_1,x_2-1},$$
for $(x_1,x_2)\in\mathbb{N}_0^2$ and depending on the node configuration $c$. Hence, changes in marginal states may be \textit{synchronized} (a jump in a process component induces an instantaneous jump on the other). This makes inference infeasible by means of variational methods relying on assumptions of independence across model sub-components \citep{opper2008variational}.

We infer memberships and processing rates from a reduced set of transitions $Z=\{Z_i\}_{i\geq 1}$, which only include entries and departures to the network, for a limited $\% 100 \cdot q_Z$ of all tasks processed, with $q_Z=0.5$. An entry marks the arrival of a task, so that $x_1,x_2\rightarrow x_1+1,x_2$, in a departure we instead have $x_1,x_2\rightarrow x_1,x_2-1$. We denote $\mathcal{J}= \{\varnothing\}\cup\mathbb{N}\times\{0\rightarrow1,2\rightarrow0\}$, and notice that (i) the state $X$ is at all times unknown and (ii) the target node behaviour is never observed. Prior identifiability constrains are imposed to ensure $\mu_1^1<\mu_1^2<\mu_1^3$.

\begin{figure}[h!]
\vskip 0in
\begin{center}
\resizebox{0.8\linewidth}{5.5cm}{
    \includegraphics[width=1\textwidth]{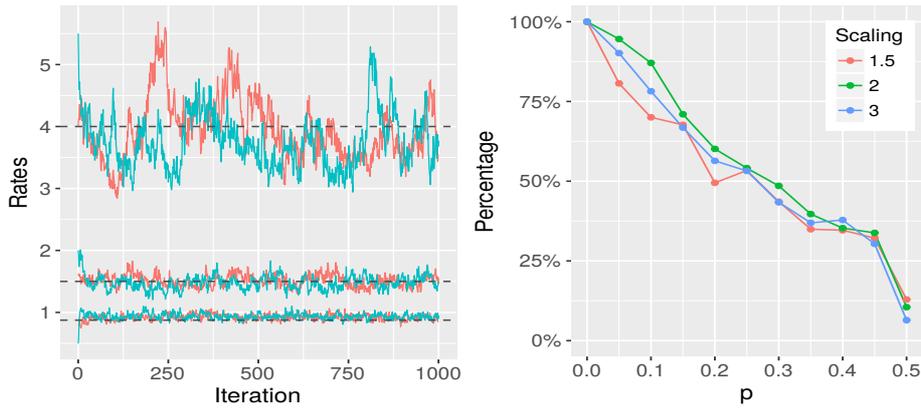}}
    \vskip 0in
   \caption{Left, MCMC trace plot for posterior service rates. Right, average percentage time decrease augmenting a path, $p=0$ serves as the reference value in each instance, corresponding to the algorithm in \citet{rao13a}.} \label{queueMCMC}
\end{center}
\vskip -0.2in
\end{figure}

In Figure \ref{queueMCMC} (left) we observe trace plots for the posterior service rates with $1000$ iterations; there, dashed horizontal lines represent the original values in the network simulations.  Results are obtained across two chains, with different tuning choices for parameters $\Omega,p$. The chains show satisfactory mixing and signs of strong serial dependencies. In the right hand side figure, we observe a summary with relative gains in computing speed during the data augmentation procedure, after repeated experiments with various dominating rate \textit{scaling levels} (so that $\Omega = K \cdot \max_{l,x} |Q^l_x|$ with $K=1.5,2,3$) and values of $p$. The diagram shows the average percentage decrease in processing time against a reference value of $p=0$ (baseline model), in each scaling instance. Again, this comparative diagram is produced with a short and restricted network realization that ensures the scalability of the baseline algorithm in \citet{rao13a}. 

\begin{table*}[t]
\setlength{\tabcolsep}{4pt}
\caption{Membership and $F_1$-score metrics in the reverse network diagnosis task, summarized across iterations in the data augmentation procedure for the different choices of clustering approaches.}
\label{tableMetrics}
\vskip 0.15in
\begin{center}
\begin{small}
\begin{sc}
\begin{tabular}{cccccccccccccc}
\toprule
 & \multirow{2}{*}{Node type} & & \multicolumn{5}{ c}{Membership summary}& & \multicolumn{5}{ c }{$F_1$ score statistics} \\[3pt]
  &&& Mean & Std &Median & Q1 & Q3 & & Mean & Std & Median& Q1 &Q3 \\ \midrule
K-means & 1 && 57.3 & 6.81 & 57 & 53  &61& & 0.71&0.07&0.70&0.66&0.76\\ 
 & 2 && 72.4 & 7.92 & 72 & 67 & 78 & & 0.58&0.06&0.58&0.54&0.62    \\ 
 &3 && 70.4 & 7.63 & 71 & 65 & 76 & & 0.55&0.04&0.54&0.52&0.58    \\ \midrule
  Pam & 1 && 54.0 & 6.98 & 54 & 50  &57& & 0.75&0.08&0.76&0.70&0.80\\ 
 & 2 && 74.8 & 8.71 & 75 & 69 & 81 & & 0.59&0.06&0.59&0.55&0.63    \\ 
 &3 && 71.2 & 8.73 & 72 & 66 & 77 & & 0.52&0.04&0.52&0.50&0.54    \\ \midrule
Gibbs & 1 && 63.7 & 6.08 & 64 & 60  & 68& & 0.64&0.04&0.65&0.61&0.67\\
  & 2 && 74.5 & 6.53 & 74 & 70 & 79 & & 0.63&0.05&0.63&0.60&0.66    \\ 
 & 3 && 61.8 & 5.72 & 62 & 58 & 66 && 0.73&0.04&0.74&0.71&0.77    \\ \bottomrule
\end{tabular}
\end{sc}
\end{small}
\end{center}
\vskip -0.1in
\end{table*}

Finally, Table \ref{tableMetrics} displays various metrics for the reverse diagnosis task. On one hand, we find a summary of network deployment memberships across the $3$ different types; after an initial burn-in period of $100$ iterations. Additionally, we include distributions of $F_1$-score metrics across the iterations in the augmentation procedure. Overall, we notice small discrepancies across clustering choices, all offering a significantly better than random partitioning. Indeed, the marginal behaviour of an individual station has a limited impact on the overall service-time distribution across an entire network (cf.  \citet{Perez2017}); and there only exist small statistical variations in the observations.

\section{Discussion}

We have presented a tractable approach for Bayesian inference with structured jump processes supported on either large or infinite state spaces. The framework is especially useful in order to address multi-component systems of coupled MJPs, which often show synchronization of events and strong dependencies across time. Our MCMC algorithm is built on uniformization principles and proceeds by exploring a multidimensional space of MJP paths, by means of restricted, alternating and sequentially correlated slices. These slices are constructed by efficiently designed measurable mappings between states and synthetic sets of jump observations. Hence, the method does not require particle filtering procedures that are often computationally intensive and find limitations due to degeneracy problems. Our experiments have shown the ability of the approach to overcome the existing computational bottleneck in various data augmentation problems, without large sacrifices in ratios of effective samples.

To date, MCMC methods are the go-to approach for posterior inference tasks with jump processes. There also exist variational methods that increase efficiency in many instances \citep{zhang2017collapsed,opper2008variational}, yet find limitations in order to quantify global system uncertainty. Hence, the need for sampling frameworks such as the one presented here is well justified.

In order to ensure mixing and to attain an optimal trade-off between computing speed and effective sample sizes, we must alternate auxiliary mappings across system subcomponents, and further tune $\Omega$ and $p$ appropriately. Increasing $p$ will prime weightings across virtual jumps and force clamped transitions on certain subsets of variables, ultimately making computationally expensive iterations less likely; yet this will increase the dependence between subsequent realizations of paths. Thus, this must be offset with increases in frequencies of virtual jumps by means of the dominating rate $\Omega$. Finally, we note that a high $p$ will hinder the algorithm from exploring the full posterior range of MJP paths. 

\bibliographystyle{apa}
\bibliography{bibliography}

\end{document}